\newcolumntype{C}{>{$}c<{$}}
\newcolumntype{L}{>{$}l<{$}}
\newcolumntype{R}{>{$}r<{$}}
\newtheorem{defi}{Definition}
\newtheorem{remark}{Remark}
\newtheorem{lemma}{Lemma}
\newtheorem{corollary}{Corollary}
\newtheorem{theorem}{Theorem}
\newcommand{\eqrf}[2][\empty]{\ensuremath{\stackrel{\text{\rm #1 \ref{#2}}}{=}}}
\newcommand{\neqrf}[2][\empty]{\ensuremath{\stackrel{\text{\rm #1 \ref{#2}}}{\neq}}}
\def\|#1>{\left|#1\right\rangle}
\def\<#1|{\left\langle#1\right|}
\newcommand{\osqr}[1][1]{\tfrac{#1}{\sqrt{2}}}
\renewcommand{\epsilon}{\varepsilon}
\newcommand{\eps}{\epsilon_{abc}}
\newcommand{\omk}{\omega_k}
\newcommand{\rmi}{\mathrm{i}}
\newcommand{\numberset}[1]{\mathds{#1}}
\newcommand{\Z}{\numberset{Z}}
\newcommand{\N}{\numberset{N}}
\newcommand{\C}{\numberset{C}}
\newcommand{\Q}{\numberset{Q}}
\newcommand{\F}{\numberset{F}}
\newcommand{\lmat}[4]{\begin{pmatrix} #1 & #2 \\ #3 & #4 \end{pmatrix}}
\newcommand{\set}[2]{\{{#1}\mid{#2}\}}
\newcommand{\pr}[2][\empty]{\ifx#1\empty\sqrt{\sigma_{#2}}\else\sqrt[#1]{\sigma_{#2}}\fi}
\newcommand{\ur}[2][\empty]{\ifx#1\empty V_{2,#1}\else V_{#1,#2}\fi}
\DeclareMathOperator{\lcm}{lcm}
\DeclareMathOperator{\Tr}{Tr}
\def\<#1>{\left\langle #1 \right\rangle}
\begin{document}

\title{Translating between the roots of the identity in quantum computers}
\author{Wouter Castryck}
\author{Jeroen Demeyer}
\affiliation{Vakgroep wiskunde, Universiteit Gent, Gent, Belgium}
\author{Alexis De Vos}
\affiliation{Cmst, Imec v.z.w.\ and vakgroep elektronica en informatiesystemen,
  Universiteit Gent, Gent, Belgium}
\author{Oliver Keszocze}
\author{Mathias Soeken}
\affiliation{Faculty of Mathematics and Computer Science,
  University of Bremen, Bremen, Germany}
\affiliation{Cyber-Physical Systems, DFKI GmbH, Bremen, Germany}


\begin{abstract}
  The Clifford+$T$ quantum computing gate library for single qubit
  gates can create all unitary matrices that are generated by the group
  $\<H, T>$.  The matrix $T$ can be considered the fourth root of Pauli $Z$,
  since $T^4 = Z$ or also the eighth root of the identity $I$.  The Hadamard
  matrix $H$ can be used to \emph{translate} between the Pauli matrices, since
  $(HTH)^4$ gives Pauli $X$.  We are generalizing both these roots of the Pauli
  matrices (or roots of the identity) and translation matrices to investigate
  the groups they generate: the so-called \emph{Pauli root groups}.  In this
  work we introduce a formalization of such groups, study finiteness and
  infiniteness properties, and precisely determine equality and subgroup
  relations.
\end{abstract}

\maketitle


\section{Introduction}
For the realization of quantum computers, one uses circuits that perform actions
on a single qubit and thus are represented by $2\times2$ unitary matrices.  For
this purpose, often roots of the Pauli matrices are applied.  Together with
operations described by translation matrices, of which the Hadamard matrix forms
a special case, many gate libraries such as the Clifford
group~\cite{DD:03,Got:96}, the Clifford+$T$
group~\cite{Sel:15,PhysRevA.87.032332}, or recently the Clifford+$T_n$
group~\cite{FGKM:15} appear in the literature.  The Clifford group is a finite
group and has applications in so-called stabilizer circuits~\cite{Got:96} while
the Clifford+$T$ group can be used for the exact synthesis of all unitary
matrices~\cite{PhysRevA.87.032332,Sel:15}.

Similar to~\cite{SMD:13} we are generalizing the roots of the Pauli matrices,
denoted $\ur[k]{a}$ (with $k \in \N$ and $a \in \{1,2,3\}$), but call them roots
of the identity, since the square of a Pauli matrix gives the identity matrix.
In the notation of the \emph{identity root}, the integer $a$ refers to one of
the three directions in the Bloch sphere and $k$ is the degree.  Note that $k$
plays the same role as $2n$ in~\cite{FGKM:15}.  Many famous matrices from the
literature are specializations of the identity root, e.g., $X = \ur[2]{1}$,
$Y = \ur[2]{2}$, $Z = \ur[2]{3}$, $S = \ur[4]{3}$, and $T = \ur[8]{3}$.  The
Hadamard operation $H$ can be regarded as a translator between the $X$ and $Z$
direction in the Bloch sphere as we have, e.g., $X = HZH$ and $Z = HXH$.  In
fact, as we will prove we also have $\ur[k]{1} = H\ur[k]{3}H$ and
$\ur[k]{3} = H \ur[k]{1} H$ which gives reason to consider a generalization of
this translation.  Since there are three directions in the Bloch sphere, there
are also three \emph{translation matrices}, denoted $\rho_{ab}$, of which
$H = \rho_{13}$ is a special case.

The above mentioned gate libraries, Clifford group and Clifford+$T$ group, are
equal to the groups $\<S, H>$ and $\<T, H>$, respectively.  That is, a group
generated by an identity root and a translation matrix.  In this paper we will
consider all such groups $\<\ur[k]{a}, \rho_{bc}>$, i.e., the matrix groups
generated by the two matrices $\ur[k]{a}$ and $\rho_{bc}$.  We call them
\emph{Pauli root groups}.  All are countable subgroups of the continuous group
$\mathrm{U}(2)$.  Our contributions are as follows: We introduce notations and
derive elementary properties for such groups (Section~\ref{sec:prelims}).  We
study finiteness and infiniteness properties
(Section~\ref{sec:equality-relation}).  Further, we precisely determine equality
and subgroup relations between two Pauli root groups
(Sections~\ref{sec:equality-relation} and~\ref{sec:subset-relation}).

\section{Preliminaries}
\label{sec:prelims}

The three \emph{Pauli matrices}~\cite{CM:1929} are given by
\begin{equation}
  \sigma_1 = \left(\begin{array}{cr} 0 & 1 \\ 1 & 0 \end{array}\right), \;
  \sigma_2 = \left(\begin{array}{cr} 0 & -\rmi \\ \rmi & 0 \end{array}\right), \;
  \sigma_3 = \left(\begin{array}{cr} 1 & 0 \\ 0 & -1 \end{array}\right).
\end{equation}
The alternate naming~$X=\sigma_1$, $Y=\sigma_2$, and $Z=\sigma_3$ is often used
and we use it whenever we refer to a specific Pauli matrix.

Matrices describing \emph{rotations} around the three axes of the Bloch sphere
are given by
\begin{align*}
  R_a(\theta)
  & =
  \cos{\textstyle\frac{\theta}{2}}I-\rmi\sin{\textstyle\frac{\theta}{2}}\sigma_a,
\end{align*}
where~$a\in \{1,2,3\}$~with~$\theta$ being the rotation angle and~$I$ the
$2\times2$ identity matrix~\cite{NC:2000}.  Each Pauli matrix specifies a half
turn $(180^{\circ})$ rotation around a particular axis up to a global phase,
i.e.,
\[
\sigma_a=e^{\frac{\rmi\pi}{2}}R_a(\pi).
\]
The conjugate transpose of~$R_a(\theta)$ is found by negating the
angle~$\theta$ or by multiplying it by another Pauli matrix~$\sigma_b$ from both
sides, i.e.,
\[
  R^\dagger_a(\theta)=R_a(-\theta)=\sigma_bR_a(\theta)\sigma_b,
\]
where~$a\neq b$. Note that it does not matter which of the two possible
$\sigma_b$ is used.  Since~$\sigma_b$ is Hermitian, we also have
$R_a(\theta)=\sigma_bR_a^\dagger(\theta)\sigma_b$.

\begin{defi}[Identity root]
  Let $\omk=e^{\frac{\rmi2\pi}{k}}$ be a $k^{\rm th}$ root of unity.  Then
  \[
    \ur[k]{a} = \omega_{2k}R_a(\tfrac{2\pi}{k})
  \]
  is referred to as the \emph{identity root} of \emph{direction} $a$ and
  \emph{degree} $k$ where $a\in\{1,2,3\}$ and $k\in\N$.
\end{defi}

We have
\begin{align}
  \label{eq:vka}
  \ur[k]{a} &= \omega_{2k}\left(\cos\left(\tfrac\pi{k}\right)I - \rmi
              \sin\left(\tfrac\pi{k}\right)\sigma_a\right) \nonumber \\
            &= \omega_{2k} \left( \frac{\omega_{2k} + \omega_{2k}^{-1}}{2} I - \frac{\omega_{2k} - \omega_{2k}^{-1}}{2} \sigma_a \right) \nonumber \\
            &= \tfrac{1}{2} \left( (\omega_{2k}^2 + 1) I - (\omega_{2k}^2 - 1) \sigma_a \right) \nonumber \\
            &= \tfrac{1}{2} \left( (1 + \omega_k) I + (1 - \omega_k) \sigma_a \right).
\end{align}
For $k=1$ we have $V_{1,a}=I$ and for $k=2$ we have $V_{2,a} = \sigma_a$.  In
particular, if $a=1$ or $a=3$, then the matrix $\ur[k]{a}$ has entries in
$\Q(\omk)$, the smallest subfield of $\C$ containing $\omk$. If $a=2$, then it
has entries in $\Q(\omk, \rmi)$, the smallest subfield of $\C$ containing both
$\omk$ and $\rmi$.  We also note that
\[
  \ur[k]{a}^\dagger = \tfrac12 \left((1+\omk^{-1})I + (1-\omk^{-1})\sigma_a \right)
\]
and
\[
  \det(V_{k,a}) = \omega_k.
\]

Explicit forms of the three identity roots are
\begin{align*}
  \ur[k]{1} & = \frac12\lmat{1+\omk}{1-\omk}{1-\omk}{1+\omk}, \\
  \ur[k]{2} & = \frac12\lmat{1+\omk}{-\rmi+\rmi\omk}{\rmi-\rmi\omk}{1+\omk},\,\text{and} \\
  \ur[k]{3} & = \lmat100{\omk}.
\end{align*}
Note that the identity roots are related to the identity roots that have been
presented in \cite{SMD:13}.  We have $\pr[k]{a}=\ur[2k]{a}$.

Throughout we will make use of the Levi-Civita symbol: for $a,b,c \in \{1,2,3\}$
we write $\varepsilon_{abc}=(a-b)(b-c)(c-a)/2$, i.e.,
$\varepsilon_{123}=\varepsilon_{231}=\varepsilon_{312}=1$,
$\varepsilon_{321}=\varepsilon_{213}=\varepsilon_{132}=-1$, and $0$ for all
other cases.

\emph{Translation} from one Pauli matrix to another is given by
\begin{equation}
  \label{eq:pauli-to-pauli}
  \sigma_a=\rho_{ab}\sigma_b\rho_{ab} \quad\text{and}\quad
  \sigma_b=\rho_{ab}\sigma_a\rho_{ab},
\end{equation}
where
\begin{equation}
  \label{eq:translation-matrix}
    \rho_{ab}=\rho_{ba}=\tfrac{1}{\sqrt2}(\sigma_a+\sigma_b) 
    =e^{\frac{\rmi\pi}{2}}R_a\left(\tfrac{\pi}{2}\right)
    R_b\left(\tfrac{\pi}{2}\right)
    R_a\left(\tfrac{\pi}{2}\right)
\end{equation}
with~$a\neq b$ are \emph{translation matrices}. Explicit forms for the three
translation matrices are
\begin{align*}
   \rho_{12} &= \osqr\lmat0{1-\rmi}{1+\rmi}0 = \lmat0{\omega_8^7}{\omega_8}0, \\
   \rho_{13} &=H= \osqr\lmat111{-1}, \text{and} \\
   \rho_{23} &= \osqr\lmat1{-\rmi}\rmi{-1}.
\end{align*}
Further, we define~$\rho_{aa}=I$.  Note that~\eqref{eq:pauli-to-pauli}
describes a conjugation since $\rho^{-1}_{ab}=\rho_{ab}$.  It can be extended to
the identity roots, giving
\begin{equation}
  \label{eq:pauli-root-to-pauli-root}
  \ur[k]{a}=\rho_{ab}\cdot\ur[k]{b}\cdot\rho_{ab}\quad\text{and}\quad
  \ur[k]{b}=\rho_{ab}\cdot\ur[k]{a}\cdot\rho_{ab},
\end{equation}
as announced in the introduction, which can be proven using~\eqref{eq:vka}
and~\eqref{eq:translation-matrix}.

On the other hand, if $\varepsilon_{abc} \neq 0$, then
\begin{equation}
  \label{eq:pauli-negate}
  -\sigma_c = \rho_{ab}\sigma_c\rho_{ab}.
\end{equation}
We conclude that conjugation by a translation matrix permutes the set
\[
  \Sigma^\pm = \{ \sigma_1, \sigma_2, \sigma_3,
                  -\sigma_1, -\sigma_2, -\sigma_3 \}
\]
of Pauli matrices and their negatives, in three different ways.  Other types of
permutations are possible: if $\varepsilon_{abc} = 1$, then using~\eqref{eq:vka}
along with $\sigma_a\sigma_b\sigma_c = \rmi\varepsilon_{abc}$ one verifies that
\begin{align}
  \label{eq:transa}
  \sigma_a &= \ur[4]{a} \sigma_a \ur[4]{a}^\dagger, \\
  \label{eq:transb}
  \sigma_c &= \ur[4]{a} \sigma_b \ur[4]{a}^\dagger, \\
  \label{eq:transc}
 -\sigma_b &= \ur[4]{a} \sigma_c \ur[4]{a}^\dagger.
\end{align}
By combining these formulas with~\eqref{eq:pauli-to-pauli}
and~\eqref{eq:pauli-negate} in all possible ways, we obtain 24 permutations of
$\Sigma^\pm$ that are induced by conjugation.  (Fun fact: these correspond to
the rotations of a die with labels $\sigma_1$, $\sigma_2$, $\sigma_3$,
$-\sigma_3$, $-\sigma_2$, and $-\sigma_1$ in place of the numbers $1,\dots,6$.)
The most interesting permutations are described by the following lemma.
\begin{lemma}
  \label{lemma:conjugation}
  If $\eps = 1$, then the conjugation
  \[ U \mapsto \rho_{ab} \ur[4]{a}^\dagger \cdot U \cdot \ur[4]{a}\rho_{ab} \]
  cyclically permutes $\sigma_a$, $\sigma_b$, and $\sigma_c$.  Consequently, it
  also cyclically permutes $\ur[k]{a}$, $\ur[k]{b}$, and $\ur[k]{c}$, as well
  as $\rho_{ab}$, $\rho_{bc}$, and $\rho_{ca}$.
\end{lemma}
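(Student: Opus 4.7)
The plan is to verify the lemma by a direct two-step computation, conjugating first by $\ur[4]{a}$ and then by $\rho_{ab}$, and finally lifting the action from Pauli matrices to $\ur[k]{\cdot}$ and $\rho_{\cdot\cdot}$ by linearity.

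First I would rewrite the identities (\ref{eq:transa})--(\ref{eq:transc}) in the form
\[
  \ur[4]{a}^\dagger \sigma_a \ur[4]{a} = \sigma_a, \quad
  \ur[4]{a}^\dagger \sigma_b \ur[4]{a} = -\sigma_c, \quad
  \ur[4]{a}^\dagger \sigma_c \ur[4]{a} = \sigma_b,
\]
which is immediate by left- and right-multiplication. Next I would conjugate each of these three images by $\rho_{ab}$, using (\ref{eq:pauli-to-pauli}) on the first and third and (\ref{eq:pauli-negate}) on the middle one. This gives $\sigma_a\mapsto\sigma_b$, $-\sigma_c\mapsto\sigma_c$, and $\sigma_b\mapsto\sigma_a$, so the composite conjugation $U\mapsto \rho_{ab}\ur[4]{a}^\dagger U\ur[4]{a}\rho_{ab}$ sends $\sigma_a\mapsto\sigma_b\mapsto\sigma_c\mapsto\sigma_a$ cyclically, which is the main claim.

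For the consequence on the identity roots, I would appeal to the explicit expression (\ref{eq:vka}),
\[
  \ur[k]{x} = \tfrac{1}{2}\bigl((1+\omk)I + (1-\omk)\sigma_x\bigr),
\]
which writes $\ur[k]{x}$ as a $\C$-linear combination of $I$ and $\sigma_x$ with coefficients that do not depend on $x$. Since conjugation is linear and fixes $I$, the cyclic permutation of $\{\sigma_a,\sigma_b,\sigma_c\}$ lifts immediately to a cyclic permutation of $\{\ur[k]{a},\ur[k]{b},\ur[k]{c}\}$. The same reasoning applied to $\rho_{xy} = \tfrac{1}{\sqrt{2}}(\sigma_x+\sigma_y)$ from (\ref{eq:translation-matrix}) produces $\rho_{ab}\mapsto\rho_{bc}\mapsto\rho_{ca}\mapsto\rho_{ab}$.

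The only delicate point is that the inner conjugation by $\ur[4]{a}$ alone is not a cyclic permutation of $\{\sigma_a,\sigma_b,\sigma_c\}$: it produces an extra minus sign on $\sigma_b$. The outer conjugation by $\rho_{ab}$ swaps $\sigma_a\leftrightarrow\sigma_b$ and simultaneously negates $\sigma_c$ via (\ref{eq:pauli-negate}), and these two sign flips cancel precisely. I therefore do not expect a genuine obstacle; the content of the lemma lies exactly in the careful bookkeeping of these signs, and the rest is linearity.
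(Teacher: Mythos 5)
Your proposal is correct and follows essentially the same route as the paper: the paper's proof likewise establishes the three composite conjugation identities $\rho_{ab}\ur[4]{a}^\dagger\sigma_x\ur[4]{a}\rho_{ab}$ (you merely make explicit the sign bookkeeping from~\eqref{eq:transa}--\eqref{eq:transc}, \eqref{eq:pauli-to-pauli}, and~\eqref{eq:pauli-negate} that the paper leaves to the reader) and then deduces the statements about $\ur[k]{x}$ and $\rho_{xy}$ from~\eqref{eq:vka} and~\eqref{eq:translation-matrix} exactly as you do by linearity of conjugation.
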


\noindent
\begin{proof}
  Using the preceding formulas, the reader verifies that indeed
 \begin{align*}
    \rho_{ab} V_{4,a}^\dagger \cdot \sigma_a \cdot V_{4,a} \rho_{ab} &=  \rho_{ab} ( V_{4,a}^\dagger \sigma_a V_{4,a}) \rho_{ab}  =  \sigma_b \\
    \rho_{ab} V_{4,a}^\dagger \cdot \sigma_b \cdot V_{4,a} \rho_{ab} &=  \rho_{ab} ( V_{4,a}^\dagger \sigma_b V_{4,a}) \rho_{ab}  =  \sigma_c \\
    \rho_{ab} V_{4,a}^\dagger \cdot \sigma_c \cdot V_{4,a} \rho_{ab} &=  \rho_{ab} ( V_{4,a}^\dagger \sigma_c V_{4,a}) \rho_{ab}  =  \sigma_a.
  \end{align*}
  The other statements then follow from Formulas~\eqref{eq:vka},
  \eqref{eq:translation-matrix}, and~\eqref{eq:pauli-root-to-pauli-root}.
\end{proof}

Formulas~\eqref{eq:transa} and~\eqref{eq:transb} lead to the following two
useful corollaries.

\begin{corollary}
\label{lemma:unity-root-as-others}
For $\epsilon_{abc}\ne 0$ we have that
\[
  \ur[k]{b}=
  \begin{cases}
    \ur[4]{c}\cdot\ur[k]{a}\cdot\ur[4]{c}^\dagger
    & \text{if $\epsilon_{abc}=1$,} \\
    \ur[4]{c}^\dagger\cdot\ur[k]{a}\cdot\ur[4]{c}
    & \text{if $\epsilon_{abc}=-1$}. \\
  \end{cases}
\]
\end{corollary}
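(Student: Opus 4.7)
The plan is to reduce the corollary to a conjugation formula at the level of the Pauli matrix $\sigma_b$, and then lift that formula to $\ur[k]{b}$ by linearity through~\eqref{eq:vka}. Concretely, since
\[
  \ur[k]{b} = \tfrac12\bigl((1+\omk)I + (1-\omk)\sigma_b\bigr),
\]
if we can find a unitary $U$ with $\sigma_b = U\sigma_a U^\dagger$, then because $U I U^\dagger = I$ and scalars commute with conjugation, the same sandwiching yields $\ur[k]{b} = U\,\ur[k]{a}\,U^\dagger$. So the game is just to select $U \in \{\ur[4]{c},\ur[4]{c}^\dagger\}$ correctly in each of the two cases.

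For $\eps = 1$, I would invoke the cyclic invariance of the Levi-Civita symbol to obtain $\epsilon_{cab} = 1$. Applying~\eqref{eq:transb} with the triple $(c,a,b)$ substituted in place of $(a,b,c)$ immediately delivers $\sigma_b = \ur[4]{c}\,\sigma_a\,\ur[4]{c}^\dagger$, and the reduction above closes this case with $U = \ur[4]{c}$.

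For $\eps = -1$, antisymmetry gives $\epsilon_{cba} = -\eps = 1$. Applying~\eqref{eq:transb} with $(c,b,a)$ in place of $(a,b,c)$ produces $\sigma_a = \ur[4]{c}\,\sigma_b\,\ur[4]{c}^\dagger$; multiplying on the left by $\ur[4]{c}^\dagger$ and on the right by $\ur[4]{c}$ rewrites this as $\sigma_b = \ur[4]{c}^\dagger\,\sigma_a\,\ur[4]{c}$, and the same reduction concludes the case with $U = \ur[4]{c}^\dagger$.

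No genuine obstacle arises here beyond carefully tracking the Levi-Civita signs: the corollary is essentially~\eqref{eq:transb} after a cyclic or order-reversing permutation of its three indices, promoted from $\sigma_b$ to $\ur[k]{b}$ via the affine formula~\eqref{eq:vka} and the observation that unitary conjugation fixes $I$ and is $\C$-linear.
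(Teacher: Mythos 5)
Your proposal is correct and matches the paper's (implicit) argument: the corollary is stated as a direct consequence of~\eqref{eq:transa} and~\eqref{eq:transb}, and your route --- permuting the indices in~\eqref{eq:transb} according to the sign of $\eps$ and then lifting the conjugation identity from $\sigma_b$ to $\ur[k]{b}$ via the affine expression~\eqref{eq:vka} --- is exactly the intended one (the same lifting step is used in the proof of Lemma~\ref{lemma:conjugation}). The sign bookkeeping ($\epsilon_{cab}=\eps$, $\epsilon_{cba}=-\eps$) and the dagger manipulation in the second case are all handled correctly.
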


\begin{corollary}
\label{lemma:translation-from-other-and-roots}
With~\eqref{eq:translation-matrix} one further gets
\[
  \rho_{ac}=
  \begin{cases}
    \ur[4]{a}\cdot \rho_{ab}\cdot \ur[4]{a}^\dagger
    & \text{if $\epsilon_{abc}=1$,} \\
    \ur[4]{a}^\dagger\cdot \rho_{ab}\cdot\ur[4]{a}
    & \text{if $\epsilon_{abc}=-1$.}
  \end{cases}
\]
\end{corollary}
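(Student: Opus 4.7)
The plan is to exploit the linear decomposition $\rho_{ab} = \tfrac{1}{\sqrt{2}}(\sigma_a + \sigma_b)$ supplied by~\eqref{eq:translation-matrix}, together with the already-established permutation rules~\eqref{eq:transa}--\eqref{eq:transc}, and to split on the sign of $\eps$. Conjugation by a fixed matrix is $\C$-linear, so $V_{4,a}^{\pm}\rho_{ab}V_{4,a}^{\mp}$ decomposes into a sum of two conjugated Pauli matrices, each of which is handed to us by the permutation identities.

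First I would dispose of the case $\eps = 1$. Linearity gives $V_{4,a}\rho_{ab}V_{4,a}^{\dagger} = \tfrac{1}{\sqrt{2}}\bigl(V_{4,a}\sigma_a V_{4,a}^{\dagger} + V_{4,a}\sigma_b V_{4,a}^{\dagger}\bigr)$, and~\eqref{eq:transa} and~\eqref{eq:transb} collapse the two summands to $\sigma_a$ and $\sigma_c$ respectively. Reassembling via~\eqref{eq:translation-matrix} yields $\tfrac{1}{\sqrt{2}}(\sigma_a+\sigma_c)=\rho_{ac}$, as required.

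For $\eps = -1$ I would observe that $\epsilon_{acb} = 1$, so applying the case-1 identities to the triple $(a,c,b)$ gives $V_{4,a}\sigma_c V_{4,a}^{\dagger} = \sigma_b$; inverting this conjugation rewrites it as $V_{4,a}^{\dagger}\sigma_b V_{4,a} = \sigma_c$, while $V_{4,a}^{\dagger}\sigma_a V_{4,a} = \sigma_a$ remains trivial from~\eqref{eq:transa}. Linearity once more produces $V_{4,a}^{\dagger}\rho_{ab}V_{4,a} = \tfrac{1}{\sqrt{2}}(\sigma_a+\sigma_c) = \rho_{ac}$.

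There is no real obstacle: the corollary is a linear-combination consequence of Lemma~\ref{lemma:conjugation}'s ingredients once $\rho_{ab}$ is split into Pauli summands. The only bookkeeping hazard is aligning the sign of $\eps$ with the side on which $V_{4,a}^{\dagger}$ appears; I would sanity-check this by noting that conjugation by $V_{4,a}$ realizes the cycle $\sigma_a\to\sigma_a,\ \sigma_b\to\sigma_c,\ \sigma_c\to -\sigma_b$ (valid when $\eps=1$), so in the opposite orientation $\eps=-1$ one must conjugate by the inverse $V_{4,a}^{\dagger}$ to again send $\sigma_b$ to $\sigma_c$.
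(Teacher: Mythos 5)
Your proof is correct and matches the argument the paper intends: it gives no explicit proof, but states that the corollary follows from Formulas~\eqref{eq:transa} and~\eqref{eq:transb} together with the decomposition $\rho_{ab}=\tfrac{1}{\sqrt{2}}(\sigma_a+\sigma_b)$ of~\eqref{eq:translation-matrix}, which is exactly the linearity-of-conjugation argument you carry out. Your handling of the $\eps=-1$ case via $\epsilon_{acb}=1$ and inversion of the conjugation is also the right bookkeeping.
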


\begin{defi}[Pauli root group]
  A group that is generated by an identity root and a translation matrix
  \[
    P = \< \ur[k]{a}, \rho_{bc} >
  \]
  is called a \emph{Pauli root group} of \emph{degree} $k$.  We have twelve such
  groups.  We distinguish three different group properties.  The group $P$ is
  called \emph{cyclic} if $b=c$ and hence $\rho_{bc}=I$.  In that case we have
  $P=\< \ur[k]{a} >$.  The group $P$ is called \emph{polycyclic}, if
  $\varepsilon_{abc}\neq 0$.  In that case all indices are distinct.  If $P$ is
  neither cyclic nor polycyclic it is called \emph{smooth}.
\end{defi}

It can easily be checked that a smooth Pauli root group has the form
$\< \ur[k]{a}, \rho_{ab}>$ with $a\neq b$.

\begin{remark}
\label{rem:iso}
The isomorphism class of the Pauli root group $\<\ur[k]{a}, \rho_{bc}>$ only
depends on its degree $k$ and on whether the group is cyclic, polycyclic, or
smooth.  Indeed, the corresponding groups are all conjugate to each other.  This
follows immediately from Lemma~\ref{lemma:conjugation}, except in the case of
two smooth Pauli root groups involving the same identity root, in which case one
can use
\[ \ur[4]{a}\<\ur[k]{a},\rho_{ab}>\ur[4]{a}^\dagger = \<\ur[k]{a},\rho_{ac}>, \]
where we assume without loss of generality that $\varepsilon_{abc} = 1$.  To see
this, one uses~\eqref{eq:transa} and~\eqref{eq:transb} along with~\eqref{eq:vka}
and~\eqref{eq:translation-matrix}.
\end{remark}

\section{Equality Relation}
\label{sec:equality-relation}
This section finishes with a theorem that gives a precise description of when
Pauli root groups of the same degree are equal and when not.  The practical
application of the theorem is that gate libraries can easily be exchanged in
case of equality of their respective Pauli root groups.

First we introduce a fact from algebraic number theory.  Recall that a complex
number is called an \emph{algebraic integer} if it is a root of a polynomial
with leading coefficient $1$ and all coefficients in $\Z$.  It is well-known
that the sum and product of two algebraic integers are again algebraic integers
\cite[Thm.\,2.8]{ST:79}.
\begin{lemma}\label{cyclofield}
  The following statements hold.
  \begin{enumerate}
  \item If $\omega_l \in \Q(\omk)$ for some positive integer $l$ then
    $l \mid \lcm(k,2)$.
  \item The field $\Q(\omk)$ is a $\Q$-vector space with basis
    \begin{equation}\label{cyclobasis}
      \{ 1, \omk, \omk^2, \dots, \omk^{\varphi(k) - 1} \},
  \end{equation}
  where
  \[ \varphi(k) = \# \{ c \in \mathbb{Z}\ |\, 1 \leq c \leq k \text{ and }
  \gcd(c,k) = 1\} \]
  is Euler's totient function.
\item The algebraic integers of $\Q(\omk)$ are precisely those elements for
  which all of the coordinates with respect to this basis are integers.
  \end{enumerate}
\end{lemma}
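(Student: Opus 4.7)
My plan is to handle the three items in the order (2), (1), (3), since (1) relies on the degree calculation in (2) and (3) is logically independent of the first two. For part (2), I would invoke the classical fact that the $k$-th cyclotomic polynomial $\Phi_k(x)\in\Z[x]$ is irreducible over $\Q$ and has $\omk$ as a root, hence is the minimal polynomial of $\omk$ over $\Q$. Since $\deg \Phi_k = \varphi(k)$, one concludes that $[\Q(\omk):\Q] = \varphi(k)$ and that the listed powers of $\omk$ form a $\Q$-basis of the simple extension $\Q(\omk)$. The only nontrivial ingredient is the irreducibility of $\Phi_k$, which I would quote from a standard reference such as \cite{ST:79}.

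For part (1), assume $\omega_l\in\Q(\omk)$, so $\Q(\omega_l,\omk)=\Q(\omk)$. I would first identify the compositum on the left with $\Q(\omega_n)$ where $n=\lcm(l,k)$: one inclusion is clear since $\omega_l$ and $\omk$ are both powers of $\omega_n$, while for the other a Bezout identity $ul+vk=\gcd(l,k)$ lets one write $\omega_n=\omega_k^{u}\omega_l^{v}$. Taking degrees via part (2) then gives $\varphi(n)=\varphi(k)$ together with the divisibility $k\mid n$. A short case analysis using $\varphi(p^e m)=p^{e-1}(p-1)\varphi(m)$ for $\gcd(p,m)=1$ shows that the only way this can hold is $n=k$, or $n=2k$ with $k$ odd; in either case $n\mid\lcm(k,2)$, and therefore so does $l$.

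Part (3) is the genuinely deep part: it asserts that the full ring of integers of $\Q(\omk)$ equals $\Z[\omk]$. One direction is immediate, since $\omk$ is a root of the monic integer polynomial $x^k-1$, so by the closure properties of algebraic integers recalled just before the lemma every $\Z$-linear combination of $1,\omk,\dots,\omk^{\varphi(k)-1}$ is itself an algebraic integer. The reverse inclusion is harder: the standard route is to compute the discriminant $\mathrm{disc}(\Z[\omk])$ and show that it is coprime to the index $[\mathcal{O}_{\Q(\omk)}:\Z[\omk]]$, or equivalently to carry out a prime-by-prime ramification analysis of the primes dividing $k$. I expect this to be the main obstacle; rather than reprove it from scratch in a paper aimed at quantum gate synthesis, I would quote it from a standard text on algebraic number theory such as Stewart--Tall \cite{ST:79} or Washington's \emph{Introduction to Cyclotomic Fields}, both of which treat this statement as a named theorem.
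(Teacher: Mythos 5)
Your proposal is correct, and for items (2) and (3) it coincides with the paper's treatment: both are quoted as standard results (the paper cites Washington, Thm.~2.5 and 2.6; your citation of the irreducibility of $\Phi_k$ for (2) and of $\mathcal{O}_{\Q(\omk)}=\Z[\omk]$ for (3) amounts to the same thing). Where you genuinely diverge is item (1). The paper cites the classification of the roots of unity inside $\Q(\omk)$ --- namely that they are exactly $\pm\omk^i$ (Washington, Ex.~2.3) --- so that $\omega_l$ lies in the cyclic group $\<\omega_2,\omk> = \<\omega_{\lcm(k,2)}>$ of order $\lcm(k,2)$, whence $l \mid \lcm(k,2)$. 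You instead identify $\Q(\omega_l,\omk)$ with $\Q(\omega_n)$ for $n=\lcm(l,k)$ and compare degrees, reducing the claim to the elementary fact that $k \mid n$ and $\varphi(n)=\varphi(k)$ force $n=k$ or $n=2k$ with $k$ odd; I checked the totient case analysis and the Bezout step $\omega_k^u\omega_l^v=\omega_n$, and both are sound. Your route is slightly more self-contained, since it leans only on the degree formula already established in (2) rather than on an additional cited classification; the paper's route is shorter on the page but outsources more. Either is a legitimate proof.
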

\begin{proof} By \cite[Ex.\,2.3]{Was:82} one has
  $\omega_l \in \set{\pm \omk^i}{i \in \Z }$.  So the first statement follows by
  writing $-1 = \omega_2$ and using the general fact that
  \[ \<\omega_m, \omega_n> = \< \omega_{\lcm(m,n)}>.\]
  The other statements are \cite[Thm.\,2.5 and 2.6]{Was:82}, respectively.
\end{proof}

Before we get to the core of Theorem~\ref{theo:eq}, we need to shed some light
into the properties of the Pauli root groups and will prove some lemmas for
special cases.  All Pauli root groups are countable.  First, we want to exactly
classify for which cases Pauli root groups are finite and for which they are
infinite.
\begin{lemma}
  \label{lem:fin}
  $P=\< \ur[k]{a}, \rho_{bc} >$ is finite if, and only if
  \begin{enumerate}
  \item $P$ is cyclic, or
  \item $P$ is polycyclic, or
  \item $P$ is smooth and $k\in\{1,2,4\}$.
  \end{enumerate}
\end{lemma}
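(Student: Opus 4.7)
The plan is to dispose of the cyclic and polycyclic cases by direct generation, handle the three smooth cases with $k\in\{1,2,4\}$ by reduction to known finite groups, and treat the remaining smooth cases via an algebraic-integer obstruction extracted from Lemma~\ref{cyclofield}.

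In the cyclic case $\rho_{bc}=I$, so $P=\langle\ur[k]{a}\rangle$ has order dividing $k$ because the eigenvalues of $\ur[k]{a}$ are $1$ and $\omk$.  In the polycyclic case $\varepsilon_{abc}\ne0$, so~\eqref{eq:pauli-negate} gives $\rho_{bc}\sigma_a\rho_{bc}=-\sigma_a$, and combining with~\eqref{eq:vka} yields the key identity
\[
  \rho_{bc}\,\ur[k]{a}\,\rho_{bc}=\omk\,\ur[k]{a}^{-1}.
\]
Multiplying both sides on the right by $\ur[k]{a}$ exhibits the scalar $\omk I$ as an element of $P$.  Hence every element of $P$ can be written in the form $\omk^j(\ur[k]{a})^l\rho_{bc}^m$ with $0\le j,l<k$ and $m\in\{0,1\}$, so $|P|\le 2k^2$.

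For the smooth case with $k\in\{1,2,4\}$, Remark~\ref{rem:iso} reduces the check to a single representative per degree.  For $k=1$ we have $P=\{I,\rho_{ab}\}$; for $k=2$ the group $\langle\sigma_a,\rho_{ab}\rangle$ is easily seen to have order $16$ (since $\sigma_a\rho_{ab}=\tfrac{1}{\sqrt2}(I+\rmi\varepsilon_{abc}\sigma_c)$ has order $8$); and for $k=4$ we may take $\langle\ur[4]{3},\rho_{13}\rangle=\langle S,H\rangle$, the standard single-qubit Clifford group, known to be finite.

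For the remaining smooth cases $k\notin\{1,2,4\}$ I would exhibit a single element of $P$ of infinite order.  A natural candidate is
\[
  M=\rho_{ab}\,\ur[k]{a}\,\rho_{ab}\,\ur[k]{a}=\ur[k]{b}\,\ur[k]{a}\in P,
\]
where the second equality uses~\eqref{eq:pauli-root-to-pauli-root}.  Expanding with~\eqref{eq:vka} and noting that $\sigma_a$, $\sigma_b$ and $\sigma_a\sigma_b=\pm\rmi\sigma_c$ are all traceless yields $\Tr M=(1+\omk)^2/2=\tfrac12+\omk+\tfrac12\omk^2$.  If $M$ were of finite order, its eigenvalues would be roots of unity, hence algebraic integers, which would force $\Tr M$ to be an algebraic integer as well.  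The hard part is to show via Lemma~\ref{cyclofield} that this fails precisely when $k\notin\{1,2,4\}$: one writes $\Tr M$ in the $\Q$-basis $\{1,\omk,\dots,\omk^{\varphi(k)-1}\}$ of $\Q(\omk)$ and shows that at least one coordinate is a half-integer.  This is immediate whenever $\varphi(k)\ge3$ (read off the coordinates $(\tfrac12,1,\tfrac12,0,\dots)$ directly), so the main obstacle is a case-by-case treatment of the small-$\varphi(k)$ values outside $\{1,2,4\}$, namely $k=3$ and $k=6$, which I would handle by reducing $\omk^2$ against the minimal polynomial of $\omk$ (giving $\Tr M=\omk/2$ and $3\omk/2$ respectively, neither having integral coordinates).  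Once this is in place, $M$ has infinite order and $P$ is infinite.
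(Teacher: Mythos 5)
Your proof is correct and follows essentially the same route as the paper: the same trichotomy, explicit finite groups for the cyclic/polycyclic/smooth-small-$k$ cases, and the same algebraic-integer obstruction via Lemma~\ref{cyclofield} (including the same special treatment of $k=3,6$). The only differences are local and cosmetic: you bound the polycyclic group by the commutation relation $\rho_{bc}\ur[k]{a}\rho_{bc}=\omk\ur[k]{a}^{-1}$ instead of the paper's explicit wreath-product description, and you use the witness $\ur[k]{b}\ur[k]{a}$, whose trace lies in $\Q(\omk)$ directly, instead of the paper's $\ur[k]{a}\rho_{ab}$, which requires squaring the trace to eliminate $\sqrt{2}$.
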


\begin{proof}
  We will prove each case separately:
  \begin{enumerate}
  \item For the cyclic case we have $P=\<\ur[k]{a}>$ and
    $\left(\ur[k]{a}\right)^k=I$.  Thus $P$ is finite of order $k$.
  \item Due to Remark~\ref{rem:iso}, it suffices to assume that $a=3$. Recall
    that
    \[ \ur[k]{3} = \lmat100\omk. \]
    Because
    \[
    \rho_{12}\ur[k]{3}\rho_{12} =
    \lmat0{\omega_8^7}{\omega_8}0
    \lmat100\omk
    \lmat0{\omega_8^7}{\omega_8}0
    = \lmat\omk001,
    \]
    our group $P$ contains
    \[ N = \<\ur[k]{3},\rho_{12}\ur[k]{3}\rho_{12}>
         = \left\{\lmat{\omk^t}00{\omk^s} \middle| 0 \le s,t < k \right\}
         \cong C_k \times C_k,
    \]
    where $C_k$ denotes the cyclic group of order $k$.  This is a strict
    subgroup of $P$, as it does not contain $\rho_{12}$ (which is not a diagonal
    matrix).  We claim that every element of $P$ can be written as
    \begin{equation}
      \label{eq:pform}
      \ur[k]{3}^s(\rho_{12}\ur[k]{3}\rho_{12})^t\rho_{12}^u
    \end{equation}
    with $0 \le s,t < k$ and $u \in \{0,1\}$.  To see this, it suffices to note
    that the form~\eqref{eq:pform} is preserved when multiplied from the right
    by $\ur[k]{3}$ or $\rho_{12}$.  In case of $\rho_{12}$ this is clear.  In
    case of $\ur[k]{3}$ and $u=0$ this follows from commutativity of $N$.  In
    the remaining case of $\ur[k]{3}$ and $u=1$ we have
    \[ \ur[k]{3}^s(\rho_{12}\ur[k]{3}\rho_{12})^t \rho_{12}\ur[k]{3}
       = \ur[k]{3}^s(\rho_{12}\ur[k]{3}\rho_{12})^{t+1} \rho_{12}.
    \]
    So the claim follows, and we conclude that $N$ is an index two (hence
    normal) subgroup and $P$ is the semi-direct product
    $N \rtimes \<\rho_{12}> \cong (C_k \times C_k) \rtimes C_2$.  In particular
    $|P| = 2k^2$.  Because conjugation by $\rho_{12}$ swaps $\ur[k]{3}$ and
    $\rho_{12}\ur[k]{3}\rho_{12}$, the underlying action of $C_2$ on
    $C_k \times C_k$ is by permutation.  So in fact $P$ is the wreath product
    $C_k \mathrel{\wr} C_2$, also known as the \emph{generalized symmetric
      group} $S(k,2)$.
  \item In case of $k=1$ we have the group $\< \rho_{ab}>$ which has 2
    elements.  If $k=2$, then $P = \<\sigma_a, \rho_{ab}>$ is generated by two
    involutions, so its structure is determined by the order of
    $\sigma_a\rho_{ab}$ which is $8$.  Therefore the group is isomorphic to
    $D_8$, the dihedral group with 16 elements.

    If $k=4$, then by Lemma~\ref{lemma:conjugation} we see that $P$ always
    contains $\rho_{12}$, $\rho_{13}$, $\rho_{23}$, $\ur[4]{1}$, $\ur[4]{2}$,
    and $\ur[4]{3}$.  So we have just one concrete group, namely the Clifford
    group, independent of the indices $a$ and $b$.  Using a computer algebra
    package it is easy to confirm the known fact that this group has 192
    elements.  Alternatively, one can consider the normal subgroup of $P$ of
    matrices having determinant $1$.  Because this is a subgroup of
    $\mathrm{SU}(2)$, its elements can be identified with quaternions, using the
    usual representation of the quaternions as $2\times2$ matrices over $\C$.
    Denote by $Q_{48}$ the group of quaternions corresponding to our normal
    subgroup.  Then one can verify that $Q_{48}$ is an instance of the
    \emph{binary octahedral group} 2O, a well-known group of order 48; see
    \cite[\textsection 7.3]{Cox:74}.  Our Pauli root group $P$ can then be
    written as $Q_{48} \rtimes \< \ur[4]{3}> \cong 2O \rtimes C_4$.  Here
    $\ur[k]{3}$ acts on $Q_{48}$ as conjugation by the quaternion $1-\rmi$,
    because the latter is represented by the matrix $(1-\rmi)\ur[4]{3}$.

    Now we investigate the remaining cases. Let $P=\<\ur[k]{a}, \rho_{ab}>$
    be smooth with $k \notin \{1,2,4\}$ and let $U = \ur[k]{a} \rho_{ab}$.
    With the help of~\eqref{eq:vka}, we calculate
\begin{align*}
  U &= \tfrac12
       \left( (1+\omk)I + (1-\omk)\sigma_a \right)
       \osqr
       (\sigma_a + \sigma_b) \\
    &= \tfrac{1}{2\sqrt2}
       \left( (1 + \omk)(\sigma_a + \sigma_b) + (1 - \omk) (I + \sigma_a\sigma_b) \right).
\end{align*}
Note that the Pauli matrices have trace 0, and that the same is true for $\sigma_a \sigma_b$,
being a scalar times a Pauli matrix.
Therefore,
\[
  \Tr(U) = \osqr (1 - \omega_k).
\]
Now if $U$ would have finite order, or in other words if $U^n = I$ for some
$n \geq 1$, then the eigenvalues $\lambda_1, \lambda_2$ of $U$ would also
satisfy $\lambda_i^n = 1$.  In particular they would be algebraic integers.
Therefore also
\begin{equation}\label{eq:square}
  (\lambda_1 + \lambda_2)^2 = \Tr(U)^2 = \tfrac12 - \omk + \tfrac12 \omk^2
\end{equation}
would be an algebraic integer. But it clearly concerns an element of
$\Q(\omk)$. So by Lemma~\ref{cyclofield} the coordinates of~\eqref{eq:square}
with respect to~\eqref{cyclobasis} would have to be integers.

We can now conclude that the coordinates of (\ref{eq:square}) with respect to the basis \eqref{cyclobasis} are
\[ \tfrac{1}{2}, \ -1, \ \tfrac{1}{2}, \ 0, \ 0, \ \dots, \ 0 \]
and therefore we run into a contradiction, except if $\varphi(k) - 1$ is smaller
than~2. The exception happens if and only if $k$ equals 1, 2, 3, 4, or
6. Excluding the cases $k=1$, 2, or 4, we still have to investigate the cases
where $k$ equals 3 or 6. Then $\varphi(k) = 2$ and we can use the identities
$\omega_3^2 = - \omega_3 - 1$ and $\omega_6^2 = \omega_6 - 1$ to rewrite
\[ \tfrac12 - \omega_3 + \tfrac12 \omega_3^2 = -\tfrac32\omega_3
   \quad\text{and}\quad
   \tfrac12 - \omega_6 + \tfrac12 \omega_6^2 = -\tfrac12\omega_6,
\]
respectively, so that the same conclusion follows.
  \end{enumerate}
  As all cases for the Pauli root groups (cyclic, polycyclic, and smooth) have
  been investigated, the `only if' direction is also shown.
\end{proof}

\begin{remark}
  In the smooth $k=4$ case, the Clifford group, the considerations from
  Section~\ref{sec:prelims} show that our group $P$ naturally acts on
  $\Sigma^\pm$ by conjugation.  This gives a surjective homomorphism from $P$ to
  a group with $24$ elements, whose kernel consists of the scalar matrices in
  $P$.  It is not hard to verify that these are precisely
  $(\rho_{13}\ur[k]{3})^{3i} = \omega_8^i I$ for $i = 0,\dots,7$.  This gives
  another way of seeing that $|P| = 8\cdot 24 = 192$.  A similar type of
  reasoning has been made in~\normalfont{\cite{Ozo:08}}.
\end{remark}

\begin{remark}
  The Clifford group $P=\<\ur[4]{a},\rho_{ab}>$ is also naturally isomorphic to
  $\mathrm{GU}(2,9)$, the group of unitary similitudes (sometimes called the
  \emph{general unitary group}) of dimension $2$ over the field $\F_9$ with $9$
  elements.  This is the group of $2\times2$ matrices $U$ over $\F_9$ such that
  $U^\dagger U=\pm I$.  The operation $U \mapsto U^\dagger$ is the conjugate
  transpose, where the conjugate is defined element-wise by
  $\F_9 \to \F_9 : a \mapsto a^3$ (the Frobenius automorphism).  Remark that if
  $\bar\imath \in \F_9$ denotes a square root of $-1 \equiv 2 \bmod 3$, then its
  conjugate is just $\bar\imath^3 = -\bar\imath$.  Then our isomorphism is
  established by
  \[ P \to \mathrm{GU}(2,9) : U \mapsto \overline{U} := U \bmod 3 \]
  where reduction modulo 3 makes sense by reducing both $\rmi$ and $\sqrt2$ to
  $\bar\imath$.  Note that the reduction of a matrix involving $\sqrt2$ is no
  longer unitary, which explains why we end up with unitary similitudes; more
  precisely
  \[ \overline{V}_{4,a}^\dagger\overline{V}_{4,a} = I \quad\text{while}\quad
     \overline{\rho}_{ab}^\dagger\overline{\rho}_{ab} = -I.
  \]
  This shows that the reduction map is a well-defined group homomorphism.  It
  turns out that it is bijective.
\end{remark}

Lemma~\ref{lem:fin} shows that $P$ is only infinite if $P$ is smooth and
$k\notin\{1,2,4\}$.  Using the lemma, we are able to relate the order of a
polycyclic Pauli root group to the order of a smooth Pauli root group when they
have the same degree.

\begin{corollary}
  \label{lem:size-inc-smo}
  Let $P$ be a polycyclic Pauli root group and $Q$ be a smooth Pauli root group
  where both have degree $k$.  Then $|P| = |Q|$, if $k=1$, and $|P|<|Q|$
  otherwise.
\end{corollary}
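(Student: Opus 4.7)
The plan is to read off the orders from the proof of Lemma~\ref{lem:fin} and compare them case by case. The proof of case~(2) of that lemma already shows that every polycyclic Pauli root group $P$ of degree $k$ has order exactly $|P| = 2k^{2}$, since $P$ was identified with the wreath product $C_k \wr C_2 \cong (C_k \times C_k) \rtimes C_2$.

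Next I would recall, directly from the proof of case~(3) of Lemma~\ref{lem:fin}, the orders of the smooth Pauli root groups $Q = \<\ur[k]{a}, \rho_{ab}>$ for the three finite values of $k$: one has $|Q| = 2$ if $k = 1$ (the group generated by a single involution $\rho_{ab}$), $|Q| = 16$ if $k = 2$ (the dihedral group $D_8$), and $|Q| = 192$ if $k = 4$ (the Clifford group). For all remaining values of $k$, Lemma~\ref{lem:fin} asserts that $Q$ is infinite.

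Comparing these to $|P| = 2k^{2}$ finishes the argument: for $k = 1$ both sides equal $2$, giving $|P| = |Q|$; for $k = 2$ we have $|P| = 8 < 16 = |Q|$; for $k = 4$ we have $|P| = 32 < 192 = |Q|$; and for any other $k$ the group $P$ is finite of order $2k^{2}$ while $Q$ is infinite, so trivially $|P| < |Q|$. No step is a serious obstacle here since the heavy lifting was already done in Lemma~\ref{lem:fin}; the only thing to be a little careful about is that the isomorphism class of $P$ (resp.\ $Q$) does not depend on the particular admissible choice of indices, which is guaranteed by Remark~\ref{rem:iso}, so that speaking of \emph{the} order of a polycyclic (resp.\ smooth) Pauli root group of a given degree is unambiguous.
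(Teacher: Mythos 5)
Your proof is correct and follows exactly the route the paper intends: the corollary is stated without its own proof precisely because it is read off from the orders computed in Lemma~\ref{lem:fin} ($|P|=2k^2$ for the polycyclic case versus $2$, $16$, $192$, or infinity for the smooth case), and your comparison $2=2$, $8<16$, $32<192$, finite~$<$~infinite is the whole argument. Your added remark that Remark~\ref{rem:iso} makes ``the order'' well defined for each type and degree is a sensible precaution but not a departure from the paper.
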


We will prove next that two polycyclic Pauli root groups of the same degree can
never be equal.  Note that for each degree~$k$ there exist only three polycyclic
Pauli root groups, which are $\<\ur[k]{1},\rho_{23}>$, $\<\ur[k]{2},\rho_{13}>$,
and $\<\ur[k]{3},\rho_{12}>$.

\begin{lemma}
  \label{lem:inc}
  Let $P=\< \ur[k]{a}, \rho_{bc} >$ and $Q=\< \ur[k]{b}, \rho_{ac}>$ be two
  polycyclic groups.  Then, $P\neq Q$.
\end{lemma}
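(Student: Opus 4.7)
The plan is to exploit the explicit description of a polycyclic Pauli root group that is obtained in the proof of Lemma~\ref{lem:fin}: in the eigenbasis of $\sigma_a$, every element of $P = \langle \ur[k]{a}, \rho_{bc} \rangle$ is either a diagonal matrix of the form $\mathrm{diag}(\omk^t,\omk^s)$ or of the form $\mathrm{diag}(\omk^t,\omk^s)\rho_{bc}$. My strategy is to show that the generator $\ur[k]{b}$ of $Q$ has neither of these two shapes, which immediately forces $\ur[k]{b} \notin P$ and hence $P \neq Q$.

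First I would use Remark~\ref{rem:iso} to simultaneously conjugate $P$ and $Q$ into the normalized form $P = \langle \ur[k]{3}, \rho_{12} \rangle$ and $Q = \langle \ur[k]{1}, \rho_{23} \rangle$; because conjugation preserves the equality relation between subgroups, proving the statement in this representative case is enough. From the explicit formula for $\ur[k]{1}$ obtained via~\eqref{eq:vka}, the diagonal entries of $\ur[k]{1}$ equal $\tfrac{1}{2}(1+\omk)$ and the off-diagonal entries equal $\tfrac{1}{2}(1-\omk)$. For $k \geq 3$ neither of these is zero, so $\ur[k]{1}$ has nonzero entries in every position and is therefore neither diagonal nor anti-diagonal; the generic case follows at once.

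The two small exceptional degrees require separate inspection. For $k=2$, $\ur[2]{1} = \sigma_1$ is indeed anti-diagonal, but the anti-diagonal elements of $P$ inherit their off-diagonal entries from $\rho_{12}$ and therefore take values in $\{\pm \omega_8^{\pm 1}\}$, which does not contain $\pm 1$; hence $\sigma_1 \notin P$. For $k=1$ both identity roots collapse to $I$, so $P = \{I, \rho_{12}\}$ and $Q = \{I, \rho_{23}\}$, and the inequality reduces to $\rho_{12} \neq \rho_{23}$. The main (and essentially only) obstacle is this last case: the generic strategy of exhibiting a generator of $Q$ outside $P$ breaks down exactly when the identity roots become trivial, so one is forced to read the inequality off the translation matrices directly.
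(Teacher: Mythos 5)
Your proof is correct, but it takes a genuinely different route from the paper's. The paper argues by a cardinality comparison: assuming $P=Q$, the group $P$ would contain the smooth group $R=\langle \ur[k]{b},\rho_{bc}\rangle$, yet Corollary~\ref{lem:size-inc-smo} gives $|R|>|P|$ for $k>1$ (the case $k=1$ being handled by inspection), a contradiction. You instead exhibit an explicit element of $Q$ lying outside $P$, using the normal form \eqref{eq:pform} from the proof of Lemma~\ref{lem:fin}: every element of the normalized polycyclic group $\langle \ur[k]{3},\rho_{12}\rangle$ is diagonal or anti-diagonal, while the identity-root generator of $Q$ has all four entries nonzero when $k\ge 3$, and for $k=2$ its anti-diagonal entries are $\pm 1$ or $\pm\rmi$ rather than $\pm\omega_8^{\pm 1}$. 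Both arguments lean on the structural analysis of polycyclic groups in Lemma~\ref{lem:fin}, but yours avoids the order and infinitude computations for smooth groups, so it is more elementary and more informative (it names a witness). Two small points to tighten. First, Remark~\ref{rem:iso} only asserts that each group separately is conjugate to a standard one; for a \emph{simultaneous} normalization you should invoke the single conjugation of Lemma~\ref{lemma:conjugation}, which cyclically permutes the three polycyclic groups of a given degree. After sending $P$ to $\langle\ur[k]{3},\rho_{12}\rangle$, the group $Q$ becomes one of $\langle\ur[k]{1},\rho_{23}\rangle$ or $\langle\ur[k]{2},\rho_{13}\rangle$ --- not necessarily the one you named --- but your entry-wise check works verbatim for $\ur[k]{2}$ as well (all entries nonzero for $k\ge 3$; anti-diagonal entries $\pm\rmi$ for $k=2$). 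Second, it would be cleaner to state explicitly that conjugating both groups by the same unitary preserves (in)equality, which you do assert and which is all that is needed.
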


\begin{proof}
  Since both groups are polycyclic, we have $a\neq b$.  We now make a case
  distinction on $k$.  If $k=1$, then $P = \{I,\rho_{bc}\}$ and
  $Q = \{I,\rho_{ac}\}$.  For $k>1$, let us assume that $P=Q$.  Then
  $\ur[k]{b}\in P$ and $R=\<\ur[k]{b},\rho_{bc}>$ is a smooth Pauli root group.
  Since $R\subseteq P$, we imply that $|R|\le |P|$.  However, since $P$ is
  polycyclic and $R$ is smooth, $|R|>|P|$ according to
  Corollary~\ref{lem:size-inc-smo} which is a contradiction and hence our
  assumption must be wrong.
\end{proof}

\begin{lemma}
  \label{lem:dif}
  Let $P=\< \ur[k]{a}, \rho_{bc} >$ and $Q=\< \ur[k]{d}, \rho_{ef} >$ be two
  Pauli root groups that are not both polycyclic, not both cyclic, and also not
  both smooth.  Then $P\neq Q$, unless $k=1$ and $\rho_{bc}=\rho_{ef}$.
\end{lemma}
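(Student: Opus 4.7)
The plan is to prove the lemma by a size-comparison argument, drawing on the explicit order computations already implicit in Lemma~\ref{lem:fin} and Corollary~\ref{lem:size-inc-smo}.

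First I would tabulate the order of a Pauli root group of degree $k$ as a function of its type. A cyclic group has order $k$ (from part~1 of Lemma~\ref{lem:fin}); a polycyclic group has order $2k^{2}$ (shown explicitly in the proof of part~2 of Lemma~\ref{lem:fin}); and a smooth group has order $2,\,16,\,192$ for $k=1,2,4$ respectively, and is infinite otherwise. These numbers will be the main input.

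Since the hypothesis forces $P$ and $Q$ to be of different types, there are three pairings to examine. For cyclic vs.\ polycyclic, I would compare $k$ with $2k^{2}$; these differ for every $k\geq 1$. For cyclic vs.\ smooth, I would compare $k$ with the smooth order: $1<2$, $2<16$, $4<192$, and otherwise $k$ is finite while the smooth order is infinite. For polycyclic vs.\ smooth, I would compare $2k^{2}$ with the smooth order: Corollary~\ref{lem:size-inc-smo} already handles every $k\geq 2$, and the finiteness dichotomy of Lemma~\ref{lem:fin} confirms that the two orders differ whenever the smooth order is infinite. In all of these subcases $|P|\neq|Q|$, so $P\neq Q$.

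The only genuine collision of orders is polycyclic vs.\ smooth at $k=1$, where both groups have order $2$. This is precisely the exception in the statement: at $k=1$ we have $\ur[1]{a}=I$, so the polycyclic group collapses to $\{I,\rho_{bc}\}$ and the smooth group to $\{I,\rho_{ef}\}$, and these coincide if and only if $\rho_{bc}=\rho_{ef}$. The main obstacle, if one wants to call it that, is simply recognising that at $k=1$ the distinction between polycyclic and smooth degenerates—both are generated by the identity together with a translation matrix—so size alone cannot separate them and the explicit exception is forced.
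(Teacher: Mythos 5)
Your proof is correct and, in the decisive polycyclic-versus-smooth case, is identical to the paper's: invoke Corollary~\ref{lem:size-inc-smo} for $k>1$ and treat $k=1$ explicitly, where both groups degenerate to $\{I,\rho\}$ and coincide exactly when the translation matrices do. The only minor difference is in the cases involving a cyclic group, where the paper exhibits $\rho_{ef}$ as an element of $Q$ not lying in $P$, whereas you compare the exact orders $k$, $2k^{2}$, and $2,16,192,\infty$ from Lemma~\ref{lem:fin}; both arguments work.
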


\begin{proof}
  If $P$ is cyclic, then $Q$ is not cyclic and we have $\rho_{ef}\neq I$, and
  $\rho_{ef}\in Q$ but $\rho_{ef}\notin P$.  It remains to consider the case
  where $P$ is polycyclic and $Q$ is smooth.  If $k=1$, then $P=\{I,\rho_{bc}\}$
  and $Q=\{I,\rho_{ef}\}$ and therefore they are only equal if
  $\rho_{bc}=\rho_{ef}$.  If $k>1$, according to
  Corollary~\ref{lem:size-inc-smo} we have $|P|<|Q|$.
\end{proof}

Now we are investigating special cases in which both Pauli root groups are
smooth.  Since they are infinite most of the cases, they are the most difficult
to consider.  First, we consider the case in which both Pauli root groups share
the same translation matrix.  There are three such cases for each degree.

\begin{lemma}
  \label{lem:sm-same-tm}
  Let $P=\<\ur[k]{a},\rho_{ab}>$ and $Q=\<\ur[k]{b},\rho_{ab}>$ with $a\neq b$.
  Then $P=Q$.
\end{lemma}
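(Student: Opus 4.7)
The plan is to observe that this lemma is an almost immediate consequence of the translation formula~\eqref{eq:pauli-root-to-pauli-root}, which tells us that conjugation by $\rho_{ab}$ interchanges $\ur[k]{a}$ and $\ur[k]{b}$. Since $\rho_{ab}$ is an involution (it is its own inverse), both generators of $Q$ can be obtained from elements already in $P$, and vice versa.

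More concretely, I would first note that $\rho_{ab} \in P$ by definition, and that $\ur[k]{a} \in P$. Applying~\eqref{eq:pauli-root-to-pauli-root} we get
\[
  \ur[k]{b} = \rho_{ab} \cdot \ur[k]{a} \cdot \rho_{ab} \in P,
\]
so both generators of $Q$ lie in $P$ and therefore $Q \subseteq P$. Reversing the roles of $a$ and $b$ (and using that~\eqref{eq:pauli-root-to-pauli-root} is symmetric in these two indices) yields
\[
  \ur[k]{a} = \rho_{ab} \cdot \ur[k]{b} \cdot \rho_{ab} \in Q,
\]
so that $P \subseteq Q$. Combining the two inclusions gives $P = Q$.

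There is no real obstacle here: the whole argument is a direct application of the translation identity established earlier, and the fact that $\rho_{ab}^{-1} = \rho_{ab}$ ensures the conjugation stays inside the generated subgroup in both directions. The lemma should be viewed as a sanity check confirming that in the smooth case the choice of which of the two involved Pauli directions one uses for the identity root is immaterial, as long as the translation matrix is fixed.
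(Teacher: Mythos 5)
Your proof is correct and is essentially identical to the paper's: both inclusions follow directly from the conjugation identity~\eqref{eq:pauli-root-to-pauli-root} together with $\rho_{ab}^{-1}=\rho_{ab}$. Nothing further is needed.
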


\begin{proof}
  Using~\eqref{eq:pauli-root-to-pauli-root}, we have
  $\ur[k]{b}=\rho_{ab}\cdot\ur[k]{a}\cdot\rho_{ab}$ and therefore $Q\le P$.
  Further, we have $\ur[k]{a}=\rho_{ab}\cdot\ur[k]{b}\cdot\rho_{ab}$ and
  therefore $P\le Q$.
\end{proof}

We will now prove equality for general smooth Pauli root groups when their
degree is a multiple of 4.

\begin{lemma}
  \label{lem:sm-same-div}
  Let $P=\<\ur[4k]{a},\rho_{ab}>$ and $Q=\<\ur[4k]{c},\rho_{cd}>$ be both
  smooth.  Then $P=Q$.
\end{lemma}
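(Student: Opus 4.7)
The plan is to show that both $P$ and $Q$ coincide with the single group $G$ generated by the common collection $\{\ur[4k]{1},\ur[4k]{2},\ur[4k]{3},\rho_{12},\rho_{13},\rho_{23}\}$. Since the generators of $P$ are a subset of $G$ (and likewise for $Q$), it suffices to show $G\subseteq P$; by symmetry the analogous inclusion $G\subseteq Q$ follows by the same argument with $(c,d)$ in place of $(a,b)$, and then $P=Q$.

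First I would extract the Clifford group from $P$. Because $4\mid 4k$, a short phase computation yields
\[
\ur[4k]{a}^{\,k} \;=\; \bigl(\omega_{8k}\,R_a(\tfrac{\pi}{2k})\bigr)^{\!k} \;=\; \omega_{8k}^{\,k}\,R_a\!\left(\tfrac{\pi}{2}\right) \;=\; \omega_8\,R_a\!\left(\tfrac{\pi}{2}\right) \;=\; \ur[4]{a},
\]
using $\omega_{8k}^{\,k}=\omega_8$ and the additivity $R_a(\theta)^k = R_a(k\theta)$. Hence $\ur[4]{a}\in P$ and $\<\ur[4]{a},\rho_{ab}>\subseteq P$. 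By the analysis of the $k=4$ case carried out inside the proof of Lemma~\ref{lem:fin}, this smooth degree-$4$ group is exactly the Clifford group, and therefore $P$ already contains all three degree-$4$ identity roots $\ur[4]{1},\ur[4]{2},\ur[4]{3}$ together with all three translation matrices $\rho_{12},\rho_{13},\rho_{23}$.

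Second, I would pull the remaining degree-$4k$ roots into $P$ using Corollary~\ref{lemma:unity-root-as-others}. For each $x\in\{1,2,3\}\setminus\{a\}$ let $y$ be the unique index with $\varepsilon_{axy}\neq 0$; the corollary then expresses $\ur[4k]{x}$ as either $\ur[4]{y}\cdot\ur[4k]{a}\cdot\ur[4]{y}^\dagger$ or $\ur[4]{y}^\dagger\cdot\ur[4k]{a}\cdot\ur[4]{y}$, according to the sign of $\varepsilon_{axy}$. Both $\ur[4]{y}$ (from the previous step) and $\ur[4k]{a}$ lie in $P$, so $\ur[4k]{x}\in P$ for every $x\in\{1,2,3\}$. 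This gives $G\subseteq P$, and the symmetric argument gives $G\subseteq Q$; combined with $P,Q\subseteq G$ this yields $P=Q$.

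No step looks like a genuine obstacle: the only mild subtlety is the phase identity $\omega_{8k}^{\,k}=\omega_8$ used in the first step, after which the result is a clean assembly of previously established structural facts, namely Lemma~\ref{lem:fin} (to identify the Clifford group) and Corollary~\ref{lemma:unity-root-as-others} (to permute identity roots by conjugation).
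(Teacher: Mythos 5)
Your proof is correct and follows essentially the same route as the paper: raise $\ur[4k]{a}$ to the $k$-th power to land in the degree-$4$ (Clifford) setting, use that to recover all three translation matrices and degree-$4$ roots, and then conjugate via Corollary~\ref{lemma:unity-root-as-others} to obtain the remaining degree-$4k$ roots. The only cosmetic difference is that you invoke the Clifford-group identification from Lemma~\ref{lem:fin} in one shot where the paper chains Corollaries~\ref{lemma:unity-root-as-others} and~\ref{lemma:translation-from-other-and-roots} explicitly.
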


\begin{proof}
  Assume that we have~$\ur[4k]a$ and~$\rho_{ab}$.  According to
  \eqref{eq:pauli-root-to-pauli-root} one can obtain a second root~$\ur[4k]{b}$
  from~$\ur[4k]{a}$ by multiplying it with~$\rho_{ab}$ on both sides.  We
  obtain~$\ur[4]{a}$ from~$\left(\ur[4k]{a}\right)^k$ such that the third Pauli
  root~$\ur[4k]{c}$ can be retrieved by applying
  Corollary~\ref{lemma:unity-root-as-others}.  Analogously to~$\ur[4]{a}$ one
  can also get~$\ur[4]{b}$ and~$\ur[4]{c}$ from~$\ur[4k]{b}$ and~$\ur[4k]{c}$,
  respectively.  Then, other translation matrices are obtained from
  Corollary~\ref{lemma:translation-from-other-and-roots}.  Hence, we have $Q\le
  P$. Analogously we can show $P\le Q$.
\end{proof}

\begin{lemma}
  \label{lem:sm-neq-div}
  Let $P=\<\ur[k]{a},\rho_{ab}>$ and $Q=\<\ur[k]{a},\rho_{ac}>$ with
  $\varepsilon_{abc}\neq0$.  If $4 \nmid k$, then $P\neq Q$.
\end{lemma}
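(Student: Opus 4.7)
The plan is to distinguish $P$ and $Q$ by an arithmetic invariant, namely the field generated by the matrix entries of their group elements. By Remark~\ref{rem:iso} (applied through Lemma~\ref{lemma:conjugation}), a single element of $\mathrm{U}(2)$ conjugates $P$ and $Q$ simultaneously to the analogous pair with $(a,b,c)$ cyclically permuted, so it suffices to treat one case. I take the canonical choice $a=3$, $b=1$, $c=2$, giving $P=\<\ur[k]{3},H>$ and $Q=\<\ur[k]{3},\rho_{23}>$.

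The generators $\ur[k]{3} = \mathrm{diag}(1,\omk)$ and $H$ of $P$ have entries in the field $K := \Q(\omk,\sqrt{2})$, so every element of $P$ does too. On the other hand, $\rho_{23} = \tfrac{1}{\sqrt{2}}\lmat{1}{-\rmi}{\rmi}{-1}$ carries $\rmi$ among its entries, so $\rho_{23} \in P$ would force $\rmi \in K$. Hence, showing $\rmi \notin K$ whenever $4\nmid k$ immediately gives $\rho_{23} \in Q\setminus P$ and therefore $P\neq Q$.

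This field-theoretic statement is the main obstacle, and I would attack it by a degree count over $\Q(\omk)$. Lemma~\ref{cyclofield}(1) gives $\rmi = \omega_4 \notin \Q(\omk)$ directly: in both sub-cases $k$ odd and $k \equiv 2 \pmod 4$ one has $4 \nmid \lcm(k,2)$. A similar totient computation yields
\[
 [\Q(\omk,\omega_8):\Q(\omk)] = \varphi(\lcm(k,8))/\varphi(k) = 4,
\]
which in turn forces $\sqrt{2} \notin \Q(\omk)$ (otherwise this degree would be at most $2$, since $\omega_8$ is then quadratic over $\Q(\omk)$). Consequently $\Q(\omk,\omega_8)/\Q(\omk)$ is Galois with group $(\Z/8\Z)^{\ast}\cong(\Z/2\Z)^{2}$, and its three distinct quadratic intermediate fields are precisely $\Q(\omk,\sqrt{2})$, $\Q(\omk,\rmi)$, and $\Q(\omk,\sqrt{-2})$. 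In particular $K = \Q(\omk,\sqrt{2}) \neq \Q(\omk,\rmi)$, so $\rmi \notin K$, and the lemma follows.
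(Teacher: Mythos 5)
Your proof is correct and follows essentially the same route as the paper: reduce by conjugation to a canonical index configuration, note that every element of $P$ has entries in $\Q(\omega_k,\sqrt{2})$ while $\rho_{ac}$ has an entry forcing $\rmi$, and rule out $\rmi\in\Q(\omega_k,\sqrt{2})$ via the totient computation $\varphi(\lcm(8,k))=4\varphi(k)$ when $4\nmid k$. The only differences are cosmetic: the paper reduces only to $c=2$ (avoiding the $b\leftrightarrow c$ symmetry of the statement that your further reduction to the single triple $(3,1,2)$ implicitly uses), and it concludes $\rmi\notin\Q(\omega_k,\sqrt{2})$ by a direct dimension count over $\Q$ rather than your Galois-correspondence argument.
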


\begin{proof}
  By Lemma~\ref{lemma:conjugation} we can assume that $c = 2$.  In that case
  $\{a,b\} = \{1,3\}$ and the entries of the matrices $\ur[k]{a}$ and
  $\rho_{ab}$ all lie in $\Q(\omk, \sqrt{2})$.  Therefore, all matrices of $P$
  have entries in $\Q(\omk, \sqrt{2})$.

  We claim that $\rmi \not\in \Q(\omk, \sqrt{2})$. Indeed, because
  $\omega_8 = (1 + \rmi)/\sqrt{2}$ the contrary would imply that
  $\Q(\omk, \sqrt{2}) = \Q(\omk, \omega_8) = \Q(\omega_{\lcm(8,k)})$.  By
  Lemma~\ref{cyclofield} the latter field has dimension $\varphi(\lcm(8,k))$
  over $\Q$.  On the other hand every element of
  $\Q(\omk, \sqrt{2})$ can be written as $x + \sqrt{2}y$ with
  $x, y \in \Q(\omk)$, and so the dimension is at most
  $2 \varphi(k)$ over $\Q$.  We would therefore have
  \[ 4 \varphi(k) = \varphi(\lcm(8,k)) \leq 2 \varphi(k), \]
  where the first equality holds because $4 \nmid k$. This is clearly a
  contradiction.

  Since the matrix $\rho_{ac} = (\sigma_a + \sigma_2)/\sqrt{2}$ has a lower-left
  entry either $(1 + \rmi)/\sqrt{2}$ (case $a=1$) or $\rmi/\sqrt{2}$ (case
  $a=3$), it can therefore not be contained in $\<V_{k,a}, \rho_{ab}>$, which
  finishes the proof.
\end{proof}

\begin{lemma}
  \label{lem:sm-all}
  Let $P=\<\ur[k]{a},\rho_{ab}>$ and $Q=\<\ur[k]{c},\rho_{cd}>$ be two smooth
  Pauli root groups.  Then $P=Q$ if, and only if $\rho_{ab}=\rho_{cd}$ or
  $4\mid k$.
\end{lemma}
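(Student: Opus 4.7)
The plan is to reduce the statement to a clean combination of the preceding smooth-case lemmas. Both directions are almost immediate once we set up the right case analysis; there is no new field-theoretic work required, since the algebraic content already lives in Lemma~\ref{lem:sm-neq-div}.

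For the ``if'' direction I would split into two sub-cases. If $4 \mid k$, invoke Lemma~\ref{lem:sm-same-div} directly. If instead $\rho_{ab}=\rho_{cd}$, note that since $P$ and $Q$ are smooth we have $a\neq b$ and $c\neq d$; because the translation matrix is symmetric in its indices, $\{a,b\}=\{c,d\}$. Either the pairs coincide as ordered pairs and $P=Q$ trivially, or they are swapped, in which case Lemma~\ref{lem:sm-same-tm} applied to $\<\ur[k]{a},\rho_{ab}>=\<\ur[k]{b},\rho_{ab}>$ gives $P=Q$.

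For the ``only if'' direction I would assume $P=Q$, $4\nmid k$, and $\rho_{ab}\neq\rho_{cd}$, aiming for a contradiction. Since $\rho$ is symmetric and smoothness forces $a\neq b$, $c\neq d$, the hypothesis $\rho_{ab}\neq\rho_{cd}$ is equivalent to $\{a,b\}\neq\{c,d\}$. Two distinct two-element subsets of $\{1,2,3\}$ meet in exactly one element; call it $e$, and let $f$ and $g$ be the remaining elements of $\{a,b\}$ and $\{c,d\}$ respectively, so that $\{e,f,g\}=\{1,2,3\}$ and in particular $\varepsilon_{efg}\neq0$. Now I would use Lemma~\ref{lem:sm-same-tm} on each group to push the identity-root index to the common index $e$:
\[
  P=\<\ur[k]{e},\rho_{ef}>\quad\text{and}\quad Q=\<\ur[k]{e},\rho_{eg}>.
\]
At this point Lemma~\ref{lem:sm-neq-div}, applied with the triple $(e,f,g)$ playing the role of $(a,b,c)$, yields $P\neq Q$, contradicting the assumption $P=Q$.

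The only delicate point is making sure the rewriting in the last step is legitimate for either choice of $e\in\{a,b\}$ (and similarly for $\{c,d\}$), which is exactly what Lemma~\ref{lem:sm-same-tm} grants. Beyond that, the argument is a pure bookkeeping exercise over the six smooth generator pairs; the nontrivial work (the proof that $\rmi\notin\Q(\omk,\sqrt{2})$ when $4\nmid k$) has already been absorbed into Lemma~\ref{lem:sm-neq-div}, so no additional obstacles are expected.
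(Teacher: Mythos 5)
Your proof is correct and follows essentially the same route as the paper: the ``if'' direction via Lemmas~\ref{lem:sm-same-div} and~\ref{lem:sm-same-tm}, and the ``only if'' direction by using Lemma~\ref{lem:sm-same-tm} to move both identity-root indices onto a common index and then invoking Lemma~\ref{lem:sm-neq-div}. The only difference is cosmetic: where the paper enumerates the cases $a=c$, $a=d$, $b=c$, $b=d$ separately, you unify them by observing that the two distinct index pairs meet in exactly one element $e$, which is a slightly cleaner way to organize the same argument.
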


\begin{proof}
  If the translation matrices are equal we know that $P=Q$ from
  Lemma~\ref{lem:sm-same-tm}.

  Assume that $\rho_{ab}\neq\rho_{cd}$ and $4\mid k$.  Then we have $P=Q$
  according to Lemma~\ref{lem:sm-same-div}.

  Assume that $\rho_{ab}\neq\rho_{cd}$ and $4\nmid k$.  If $a=c$ we have $P\neq
  Q$ according to Lemma~\ref{lem:sm-neq-div}.  If $a\neq c$ we must have
  $\varepsilon_{abc}\neq 0$.  We consider all cases:

  For the case $a=d$ we compute:
    \[P=\<\ur[k]{a},\rho_{ab}> \neqrf[Lemma]{lem:sm-neq-div}
    \<\ur[k]{a},\rho_{ac}>
    \eqrf[Lemma]{lem:sm-same-tm} \<\ur[k]{c},\rho_{ac}> = Q\]

  For the case $b=c$, we compute:
    \[
    P = \<\ur[k]{a},\rho_{ab}> \eqrf[Lemma]{lem:sm-same-tm}
    \<\ur[k]{b},\rho_{ab}>
    \neqrf[Lemma]{lem:sm-neq-div}
    \<\ur[k]{b},\rho_{bd}>=Q
    \]
  For the case $b=d$ we compute:
    \begin{align*}
    P  &= \<\ur[k]{a},\rho_{ab}> \\
       &\eqrf[Lemma]{lem:sm-same-tm} \<\ur[k]{b},\rho_{ab}> \\
       &\neqrf[Lemma]{lem:sm-neq-div} \<\ur[k]{b},\rho_{bc}> \\
        &\eqrf[Lemma]{lem:sm-same-tm} \<\ur[k]{c},\rho_{bc}>=Q \qedhere
    \end{align*}
\end{proof}

This leads us to the theorem of this section.

\begin{theorem}
\label{theo:eq}
Let
\[ P = \< \ur[k]{a}, \rho_{bc} > \quad\text{and}\quad Q= \< \ur[k]{d}, \rho_{ef} > \]

We have $P=Q$ if, and only if

\begin{enumerate}
\item $k=1$ and $\rho_{bc}=\rho_{ef}$, or
\item $P$ and $Q$ are both cyclic and $a=d$, or
\item $P$ and $Q$ are both smooth and $\rho_{bc}=\rho_{ef}$, or
\item $P$ and $Q$ are both smooth and $4 \mid k$.
\end{enumerate}
\end{theorem}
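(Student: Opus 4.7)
The plan is to assemble Lemmas~\ref{lem:dif}, \ref{lem:inc}, \ref{lem:sm-same-tm}, \ref{lem:sm-same-div}, and \ref{lem:sm-all}, which already isolate all of the algebraic content, and to conclude by organized casework on the type (cyclic, polycyclic, or smooth) of each of $P$ and $Q$.

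For the \emph{if} direction, I would verify each of the four sufficient conditions individually. Under~(1), $\ur[1]{a}=\ur[1]{d}=I$, so $P=\<\rho_{bc}>=\<\rho_{ef}>=Q$. Under~(2), both groups collapse to $\<\ur[k]{a}>=\<\ur[k]{d}>$. Conditions~(3) and~(4) are precisely the combined content of Lemma~\ref{lem:sm-all}.

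For the \emph{only if} direction, I would assume $P=Q$ and split according to the types of the two groups. If $P$ and $Q$ have different types, Lemma~\ref{lem:dif} forces $k=1$ together with $\rho_{bc}=\rho_{ef}$, yielding condition~(1). If both are cyclic, then $\rho_{bc}=\rho_{ef}=I$ automatically; for $k=1$ this again reduces to~(1), while for $k\ge 2$ I would argue that $a=d$ by observing that every matrix in $\<\ur[k]{a}>$ commutes with $\sigma_a$ and hence shares its eigenbasis, whereas $\ur[k]{d}$ with $a\ne d$ is non-scalar and has the incompatible eigenbasis of $\sigma_d$. If both are polycyclic, Lemma~\ref{lem:inc} rules out $a\ne d$, so either the generating pairs already coincide (with $k=1$ absorbed into~(1)) or one obtains an immediate contradiction. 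If both are smooth, Lemma~\ref{lem:sm-all} delivers exactly~(3) or~(4).

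Since the heavy algebra has been packaged into the preceding lemmas, the main obstacle is not computational but bookkeeping: one must verify that the four enumerated conditions jointly exhaust every type-pair scenario and track the special role of $k=1$, which repeatedly collapses the distinctions between types. The subtlest individual step is the cyclic--cyclic case with $a\ne d$, where one must rule out $\ur[k]{d}\in\<\ur[k]{a}>$ via a direct eigenvalue argument rather than by invoking any earlier lemma.
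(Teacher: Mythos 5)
Your proposal is correct and follows essentially the same route as the paper: the \emph{if} direction is read off from the definitions and Lemma~\ref{lem:sm-all}, and the \emph{only if} direction is the same casework on types using Lemma~\ref{lem:dif} for mixed types, Lemma~\ref{lem:inc} for the polycyclic--polycyclic case, and Lemma~\ref{lem:sm-all} for the smooth--smooth case. The only divergence is that you spell out the cyclic--cyclic case with $a\neq d$ via an explicit eigenbasis argument, where the paper simply declares $\<\ur[k]{a}>\neq\<\ur[k]{d}>$ to be obvious.
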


\begin{proof}
  If $k=1$ and $\rho_{bc}=\rho_{ef}$ we have $P=Q=\{I,\rho_{bc}\}$ which proves
  the first case of the theorem.  Because of Lemma~\ref{lem:dif} all other
  groups with different properties are not equal, where a property is being
  either cyclic, polycyclic, or smooth.  Hence, in the following it is
  sufficient to assume that $P$ and $Q$ have the same property:

  \begin{itemize}
  \item   If both $P$ and $Q$ are cyclic, then for $a \neq d$, we obviously have
    $\<\ur[k]{a}> \neq \<\ur[k]{d}>$ and otherwise $\<\ur[k]{a}> = \<\ur[k]{a}>$,
    which covers the second case of the theorem.
  \item According to Lemma~\ref{lem:inc} two polycyclic groups are not equal.
  \item The third and fourth case follow from Lemma~\ref{lem:sm-all}.
\end{itemize}
All cases have been considered which concludes the proof.
\end{proof}

\section{Subgroup Relation}
\label{sec:subset-relation}
In this section we investigate the relation of Pauli root groups of different
degree but the same direction in the identity root and translation matrix.
First we prove a lemma that will be used in the theorem.

\begin{lemma}
  \label{lemma:uroot-exp-divisor}
  Let $k$ and $d$ be natural numbers.  Then we have
  \[ V_{dk,a}^d = V_{k,a}. \]
\end{lemma}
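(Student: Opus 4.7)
The plan is a direct calculation from the definition $\ur[k]{a} = \omtk R_a(\tpik)$. Since $\omega_{2dk}$ is a scalar and therefore commutes with the matrix factor, the $d$-th power splits cleanly as
\[ \ur[dk]{a}^d \;=\; \omega_{2dk}^d \cdot R_a\!\left(\tfrac{2\pi}{dk}\right)^{\!d}. \]
The scalar prefactor simplifies on sight: $\omega_{2dk}^d = e^{\rmi\pi/k} = \omega_{2k}$.

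For the matrix factor, I would invoke the one-parameter-group property $R_a(\theta) R_a(\varphi) = R_a(\theta+\varphi)$, i.e.\ the standard fact that rotations about a fixed axis add.  This follows from a two-line expansion of the product $\bigl(\cos(\theta/2)I - \rmi\sin(\theta/2)\sigma_a\bigr)\bigl(\cos(\varphi/2)I - \rmi\sin(\varphi/2)\sigma_a\bigr)$, using $\sigma_a^2 = I$ together with the angle-addition formulas for $\sin$ and $\cos$.  Iterating the identity $d$ times then gives $R_a(2\pi/(dk))^d = R_a(2\pi/k)$, and combining the two pieces yields $\omega_{2k} R_a(2\pi/k) = \ur[k]{a}$, as desired.

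Nothing in the argument should offer serious resistance: the only non-cosmetic ingredient is the composition law for rotations about a single axis, which is a standard short check.  An alternative route would be to bypass $R_a$ entirely and work from the explicit form~\eqref{eq:vka}; one could first reduce to the case $a=3$ by conjugation with a suitable translation matrix via~\eqref{eq:pauli-root-to-pauli-root}, and for $a=3$ the claim is immediate from the diagonal form $\ur[dk]{3}^d = \mathrm{diag}(1,\omega_{dk}^d) = \mathrm{diag}(1,\omega_k) = \ur[k]{3}$.  Either route finishes the proof with essentially no obstacle.
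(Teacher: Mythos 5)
Your proof is correct, but your primary route differs from the paper's. The paper first reduces to the case $a=3$ via Lemma~\ref{lemma:conjugation} and then observes that the claim is the trivial diagonal identity $\mathrm{diag}(1,\omega_{dk})^d = \mathrm{diag}(1,\omega_k)$ --- exactly the ``alternative route'' you sketch at the end. Your main argument instead works uniformly in $a$: you split $V_{dk,a}^d$ into the scalar part $\omega_{2dk}^d = \omega_{2k}$ and the rotation part, and invoke the one-parameter-group law $R_a(\theta)R_a(\varphi) = R_a(\theta+\varphi)$, which you correctly note follows from $\sigma_a^2 = I$ and the angle-addition formulas. Both computations check out. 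What your route buys is independence from the conjugation machinery of Lemma~\ref{lemma:conjugation} and a statement that is manifestly symmetric in the three directions; what the paper's route buys is brevity, since after the reduction the entire content is a one-line computation with a diagonal matrix. Either is a complete proof.
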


\begin{proof}
  By Lemma~\ref{lemma:conjugation}, it suffices to prove this for $a=3$.  Then
  the statement is equivalent to
  \[
  {\lmat100{\omega_{dk}}}^d = \lmat100{\omk},
  \]
  which is immediately seen to hold.
\end{proof}

This leads us to the theorem of this section.

\begin{theorem}
  Let $P=\<\ur[k]{a},\rho_{bc}>$ and $Q=\<\ur[l]{a},\rho_{bc}>$ be two Pauli
  root groups.  Then $P\le Q$ if, and only if $k\mid l$.
\end{theorem}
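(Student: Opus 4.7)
The forward direction $k \mid l \Rightarrow P \le Q$ is immediate from Lemma~\ref{lemma:uroot-exp-divisor}: writing $l = dk$ yields $\ur[l]{a}^d = \ur[k]{a}$, so both generators of $P$ lie in $Q$.

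For the converse, since $\rho_{bc} \in Q$ is automatic, it suffices to show $\ur[k]{a} \in Q$ forces $k \mid l$. The determinant homomorphism is a natural first tool: $\det\ur[l]{a} = \omega_l$ and $\det\rho_{bc} \in \{\pm 1\}$, so $\det Q$ lies in $\<\omega_l\>$ when $b = c$ and in $\<\omega_l, -1\>$ when $b \neq c$. Demanding $\omega_k \in \det Q$ then yields $k \mid l$ immediately whenever $b = c$, and whenever $b \neq c$ with $l$ even. The polycyclic case is settled by the explicit normal form $\mathrm{diag}(\omega_l^t,\omega_l^s)\rho_{bc}^u$ from the proof of Lemma~\ref{lem:fin} (after reducing to $a = 3$ via Lemma~\ref{lemma:conjugation}): since $\ur[k]{3}=\mathrm{diag}(1,\omega_k)$ is diagonal, matching forces $\omega_k \in \<\omega_l\>$, i.e.\ $k \mid l$.

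This leaves the smooth case with $l$ odd, where the determinant only yields $k \mid 2l$. If $k$ is odd then $\gcd(k,2)=1$ upgrades $k \mid 2l$ to $k \mid l$, so the problematic possibility is $k$ even. In that case Lemma~\ref{lemma:uroot-exp-divisor} puts $\sigma_a = \ur[2]{a} = \ur[k]{a}^{k/2}$ into $Q$, and my plan is to reach a contradiction by showing $\sigma_a \notin Q$ whenever $l$ is odd and $Q$ is smooth.

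The main obstacle is this final step. I propose a $\sqrt 2$-denominator argument. Reducing to $a = 3$ via Lemma~\ref{lemma:conjugation}, set $T := \ur[l]{3}$, whose entries lie in $R := \Z[\omega_l]$ (if $c = 1$) or $R := \Z[\omega_l, \rmi]$ (if $c = 2$), and observe that $\rho_{3c}$ takes the form $(\sqrt 2)^{-1}M_0$ with $M_0 \in M_2(R)$. A short induction on word length then shows that any element of $Q$ expressed as $T^{a_0}\rho_{3c}T^{a_1}\cdots\rho_{3c}T^{a_n}$ with $n$ factors of $\rho_{3c}$ has entries in $(\sqrt 2)^{-n}R$. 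If $\sigma_3 \in Q$ admits such an expression, then $(\sqrt 2)^n \in R$, and I plan to rule out $\sqrt 2 \in R$ for $l$ odd as follows: if $\sqrt 2 \in R$, then $\omega_8 = (1+\rmi)/\sqrt 2$ lies in $\Q(\omega_l, \rmi) = \Q(\omega_{\lcm(l,4)})$, so Lemma~\ref{cyclofield}(1) forces $8 \mid \lcm(\lcm(l,4),2) = \lcm(l,4)$, hence $2 \mid l$, contradicting $l$ odd. Consequently $n$ must be even. Tracking determinants along the same expression gives $-1 = \det\sigma_3 = \omega_l^{\sum_i a_i}(-1)^n$, which with $n$ even requires $-1 \in \<\omega_l\>$, again impossible for $l$ odd. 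The resulting contradiction closes the proof.
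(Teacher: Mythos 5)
Your proof is correct and takes essentially the same route as the paper's: the determinant homomorphism handles everything except the case $l$ odd, which is then settled by noting that a word with $n$ translation factors has entries in $(\sqrt{2})^{-n}$ times a ring of cyclotomic integers, that $\sqrt{2} \notin \Q(\omega_{4l})$ by Lemma~\ref{cyclofield} (forcing $n$ to be even), and a second determinant computation exploiting that $\langle \omega_l \rangle$ has odd order. The paper applies this directly to $V_{k,a}$ and concludes $\omega_k \in \langle \omega_l, -2\rangle$, whereas you detour through $\sigma_a = V_{k,a}^{k/2}$ and a separate polycyclic normal form; this is a cosmetic reorganization rather than a different method.
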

\begin{proof}
  The `if' part follows from Lemma~\ref{lemma:uroot-exp-divisor}. As for the
  `only if' part, assume that $P \leq Q$.  By taking determinants we find that
  $\<\omk, -1> \leq \<\omega_l, -1>$ as subgroups of $\C^\times$. We conclude
  that $\lcm(k,2) \mid \lcm(l,2)$.
  \begin{itemize}
    \item If $l$ is even it follows that $k \mid l$, as desired.
   \item If $l$ is odd then $k \mid 2 l$ and we can proceed as follows. Recall that $\rho_{bc} = \frac{1}{\sqrt{2}} (\sigma_b + \sigma_c)$, so we can write
   \begin{equation} \label{largerinclusion}
      V_{k,a} = \frac{1}{\sqrt{2}^s} W
   \end{equation}
   for some $s \in \Z$ and some $W \in \<V_{l,a}, \sigma_b + \sigma_c>$.  This
   implies that
   \[ \frac{1}{\sqrt{2}^s} I = V_{k,a} W^{-1}\]
   has entries in the field
   $\Q(\omk, \omega_{l}, \rmi) \subset \Q(\omega_{4l})$.  The latter field does
   not contain $\sqrt{2}$, for otherwise it would contain
   $\omega_8 = (1 + \rmi) / \sqrt{2}$, which is impossible by
   Lemma~\ref{cyclofield}, because $8 \nmid 4l$ (recall that $l$ is odd).  We
   conclude that $s$ must be even.  On the other hand, by taking determinants
   and using the fact that $s$ is even, (\ref{largerinclusion}) also implies
   that
   $\omega_k \in \langle 1/4, \omega_l, -2 \rangle = \langle \omega_l, -2 \rangle$.
   We conclude that $k \mid l$.
\end{itemize}
The lemma follows.
\end{proof}

\section{Conclusions}
In this paper we have introduced Pauli root groups, which are generated by a
root of the identity matrix and a translation matrix.  The Clifford group and
the Clifford+$T$ group are special instances of the Pauli root groups.  We have
shown properties of these groups and precisely determined equality and
containment relations.  This has useful implications for quantum computing since
now scenarios can be found in which gates can be interchanged without changing
the unitary matrices that can be generated with them.

\section*{Acknowledgments}
We thank Tom De Medts, Aaron Lye, Philipp Niemann, Erik Rijcken, Michael
Kirkedal Thomsen, and Jasson Vindas for helpful discussions.  The authors
acknowledge the support by the European COST Action IC~1405 `Reversible
Computation.'

\bibliography{library}

\end{document}